\documentclass[11pt]{article}
\usepackage{amsmath,amssymb,amsfonts}
\usepackage{latexsym}
\usepackage{graphicx}
\usepackage{algorithm,algpseudocode}
\usepackage{times}
\textheight=9in
\textwidth=6.5in
\headheight=0mm \headsep=0mm
\topmargin=0mm 
\oddsidemargin=0mm \evensidemargin=0mm
\pagestyle{plain} 
\renewcommand\thefootnote{\arabic{footnote}}



\usepackage{amsthm}
\newtheorem{theorem}{Theorem}[section]
\newtheorem{lemma}[theorem]{Lemma}
\newtheorem{proposition}[theorem]{Proposition}

\title{
Any Finite Distributive Lattice is Isomorphic to\\
the Minimizer Set of an ${\rm M}^{\natural}$-Concave Set Function
 }
\author{
 Tomohito Fujii\footnotemark[1] \and 
 Shuji Kijima\footnotemark[1]~\footnotemark[2]
 }

\begin{document}
\maketitle
\renewcommand\thefootnote{\fnsymbol{footnote}}
\footnotetext[1]{
 Graduate School of Information Science and Electronic Engineering, 
 Kyushu University
}
\footnotetext[2]{
JST PRESTO, 744 Motooka, Nishi-ku, Fukuoka, 819-0395, Japan
}

\renewcommand\thefootnote{\arabic{footnote}}

\begin{abstract}
 Submodularity is an important concept in combinatorial optimization, and 
  it is often regarded as a discrete analog of convexity. 
 It is a fundamental fact that the set of minimizers of any submodular function forms a distributive lattice. 
 Conversely, it is also known that  
  any finite distributive lattice is isomorphic to the minimizer set of a submodular function, 
  through the celebrated Birkhoff's representation theorem. 
 ${\rm M}^{\natural}$-concavity is a key concept in discrete convex analysis. 
 It is known for set functions 
  that the class of ${\rm M}^{\natural}$-concavity 
   is a proper subclass of submodularity. 
 Thus, the minimizer set of an ${\rm M}^{\natural}$-concave function forms a distributive lattice. 
 It is natural to ask if any finite distributive lattice appears 
   as the minimizer set of an ${\rm M}^{\natural}$-concave function.  
 This paper affirmatively answers the question. 

\bigskip
\noindent
{\bf Keywords: }
 ${\rm M}^{\natural}$-concave, valuated matroid, submodular, distributive lattice, {\#}BIS
\end{abstract}

\section{Introduction}
\subsection{The minimizer set of a submodular function forms a distributive lattice}
 A set function $f \colon 2^N \to \mathbb{R}$ for a finite set $N$ is {\em submodular} if 
\begin{align}
 f(X) + f(Y) \geq f(X \cup Y) + f(X \cap Y)
\end{align} 
 holds for any $X,Y \in 2^N$. 
 Submodularity is often regarded as a discrete analog of convexity~\cite{Lovasz83,FujishigeB,MurotaB,NYKY17+}. 
 A submodular function is efficiently minimized~\cite{Schrijver00,IFF01,IO09,LSC15}, and 
  it has many applications in economics, machine learning, etc.  
 On the other hand, 
  maximization of a submodular function, e.g., max cut, is a celebrated NP-hard problem, and 
  approximation is  recently investigated with applications in machine learning, see e.g.,~\cite{NWF78,FMV07}. 

 It is a fundamental fact on submodular functions 
   that the set of minimizers of a submodular function forms a distributive lattice. 
 Conversely, any finite distributive lattice ``appears'' as the minimizers of a submodular function. 
 For a finite {\em partially ordered set} ({\em poset}) $\mathcal{P}=(N,\preccurlyeq)$, 
   $I \subseteq N$ is an {\em ideal} of $\mathcal{P}$
   if $x \preccurlyeq y \in I \Rightarrow x \in I$ holds for any $x, y\in N$.
 Let ${\cal I}({\cal P})$ denote the set of whole ideals of the poset ${\cal P}$. 
 Then, ${\cal I}({\cal P})$ forms a distributive lattice. 
 The following celebrated theorem is due to Birkhoff~\cite{Birkhoff37}. 
\begin{theorem}[Birkhoff's representation theorem \cite{Birkhoff37,FujishigeB}]
\label{thm:Bikhoff}
 For any finite distributive lattice ${\cal D}$, 
   there exists a poset ${\cal P}$ such that ${\cal I}({\cal P})$ is isomorphic to ${\cal D}$. 
\end{theorem}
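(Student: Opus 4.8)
The plan is to realize $\mathcal{D}$ through its join-irreducible elements. Assume $\mathcal{D}$ has a least element $\hat{0}$ (every finite lattice does), and call $p \in \mathcal{D}$ \emph{join-irreducible} if $p \neq \hat{0}$ and $p = a \vee b$ forces $p = a$ or $p = b$. Let $\mathcal{P} = (J, \leq)$ be the set $J$ of join-irreducible elements of $\mathcal{D}$, ordered by the restriction of the lattice order $\leq$ of $\mathcal{D}$. I would then define the candidate isomorphism $\phi \colon \mathcal{D} \to \mathcal{I}(\mathcal{P})$ by
\begin{align}
\phi(x) = \{\, p \in J : p \leq x \,\}.
\end{align}
The first routine checks are that $\phi(x)$ is genuinely an ideal of $\mathcal{P}$ --- immediate, since $q \leq p \leq x$ gives $q \leq x$ --- and that $\phi$ is order-preserving, i.e.\ $x \leq y \Rightarrow \phi(x) \subseteq \phi(y)$, which is equally direct.

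The next step is injectivity, and here I would use the standard fact that in a finite lattice every element is the join of the join-irreducibles below it, that is $x = \bigvee \phi(x)$ for all $x$. This follows by a short induction on the height of $x$: if $x$ is not itself join-irreducible it splits as $x = a \vee b$ with $a, b < x$, and applying the inductive hypothesis to $a$ and $b$ expresses $x$ as a join of elements of $\phi(x)$. Granting $x = \bigvee \phi(x)$, the equality $\phi(x) = \phi(y)$ immediately yields $x = y$, so $\phi$ is injective; moreover the same identity shows that $\phi(x) \subseteq \phi(y)$ implies $x = \bigvee \phi(x) \leq \bigvee \phi(y) = y$, so $\phi$ is an order-embedding.

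I expect the crux, and the only place distributivity is actually needed, to be surjectivity. Given an arbitrary ideal $I \subseteq J$, set $x = \bigvee I$; I must show $\phi(x) = I$. The inclusion $I \subseteq \phi(x)$ is clear. For the reverse I would first prove the key lemma that in a distributive lattice every join-irreducible $p$ is \emph{join-prime}: if $p \leq a \vee b$ then $p \leq a$ or $p \leq b$. This is exactly where distributivity enters, through
\begin{align}
p = p \wedge (a \vee b) = (p \wedge a) \vee (p \wedge b),
\end{align}
so join-irreducibility of $p$ forces $p = p \wedge a$ or $p = p \wedge b$. An easy induction extends this to finite joins: if $p \leq \bigvee I$ then $p \leq q$ for some $q \in I$. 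Finally, for $p \in \phi(x)$ we have $p \leq x = \bigvee I$, whence $p \leq q$ for some $q \in I$; since $I$ is a down-set in $\mathcal{P}$ and $p \in J$, we conclude $p \in I$. This gives $\phi(x) = I$ and completes surjectivity, so $\phi$ is the desired lattice isomorphism $\mathcal{D} \cong \mathcal{I}(\mathcal{P})$.
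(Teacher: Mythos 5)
The paper does not prove this theorem; it is quoted as a known result with citations to Birkhoff and Fujishige. Your argument is the standard (and correct) proof via join-irreducibles: the map $\phi(x)=\{p\in J: p\leq x\}$, injectivity from $x=\bigvee\phi(x)$, and surjectivity from the fact that distributivity makes every join-irreducible join-prime, with the only (trivial) unhandled case being the empty ideal, where $\bigvee\emptyset=\hat 0$ and $\phi(\hat 0)=\emptyset$. Note also that your order isomorphism is automatically a lattice isomorphism since joins and meets are determined by the order, so the proof is complete.
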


 Using Theorem~\ref{thm:Bikhoff}, 
  it is known that 
 any finite distributive lattice is isomorphic to the minimizer set of a submodular function, as follows.  
\begin{proposition}[see e.g., \cite{FujishigeB}]\label{prop:down-submo}
 As given a finite poset ${\cal P}=(N,\preccurlyeq)$, let $f \colon 2^N \to \mathbb{R}$ be defined by 
\begin{align}
 f(X) &= |\{ j \in N \setminus X : \mbox{$\exists i \in X$ such that $j \prec i$}  \}|
\label{def:down-submo}
\end{align}
 for any $X \in 2^N$, 
  where $j \prec i$ denotes $j \preccurlyeq i$ and $j \neq i$. 
 Then $f$ is submodular, and it satisfies 
\begin{equation*}
f(X)\begin{cases}
=0& \text{if $X \in {\cal I}({\cal P})$,}\\
>0& \text{otherwise,}
\end{cases}
\end{equation*}
 for any $X \in 2^N$. 
\end{proposition}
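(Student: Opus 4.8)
The claim bundles two assertions: submodularity of $f$, and the identity $f(X)=0\iff X\in\mathcal{I}(\mathcal{P})$ with $f(X)>0$ otherwise. The plan is to dispatch the characterization first, since it is immediate, and then to establish submodularity by decomposing $f$ into simple per-element pieces. For the characterization, I would simply unwind the definitions. Because $f(X)$ is a cardinality it is nonnegative, and $f(X)=0$ precisely when there is no $j\in N\setminus X$ admitting some $i\in X$ with $j\prec i$; equivalently, whenever $i\in X$ and $j\prec i$ one has $j\in X$. Since $j\preccurlyeq i$ together with $j\neq i$ is the same as $j\prec i$, this is exactly the downward-closedness defining an ideal, so $f(X)=0\iff X\in\mathcal{I}(\mathcal{P})$, and $f(X)>0$ in every remaining case.

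For submodularity, the plan is to write $f$ as a nonnegative sum of Boolean summands, one per element, and verify each is submodular. For $j\in N$ let $D_j=\{i\in N: j\prec i\}$ be the strict up-set of $j$ (so $j\notin D_j$), and define $g_j\colon 2^N\to\{0,1\}$ by
\[
g_j(X)=
\begin{cases}
1 & \text{if } j\notin X \text{ and } X\cap D_j\neq\emptyset,\\
0 & \text{otherwise.}
\end{cases}
\]
Directly from the definition, $f(X)=\sum_{j\in N} g_j(X)$; since a nonnegative combination of submodular functions is submodular, it suffices to prove each $g_j$ is submodular.

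To show $g_j$ is submodular I would use the diminishing-marginal-returns criterion, namely $g_j(X\cup\{e\})-g_j(X)\ge g_j(Y\cup\{e\})-g_j(Y)$ for all $X\subseteq Y$ and $e\notin Y$, and split into three cases on the added element $e$. If $e\notin\{j\}\cup D_j$, then adding $e$ affects neither the condition $j\notin\,\cdot$ nor $\cdot\cap D_j\neq\emptyset$, so both marginals vanish. If $e=j$, then $j\notin Y\supseteq X$, and each marginal equals $-\mathbf{1}[\,\cdot\cap D_j\neq\emptyset\,]$ evaluated at $X$ and $Y$; since $X\mapsto\mathbf{1}[X\cap D_j\neq\emptyset]$ is nondecreasing under $X\subseteq Y$, the inequality follows. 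If $e\in D_j$ (hence $e\neq j$), then the marginal at $X$ equals $\mathbf{1}[j\notin X]\cdot\mathbf{1}[X\cap D_j=\emptyset]$, and likewise at $Y$; both factors are nonincreasing in the set argument and nonnegative, so the marginal at $X$ dominates that at $Y$. Summing over $j$ then gives submodularity of $f$.

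I expect the last case to be the only delicate point. The summand $g_j$ is a product $\mathbf{1}[j\notin X]\cdot\mathbf{1}[X\cap D_j\neq\emptyset]$ of a nonincreasing and a nondecreasing Boolean function, and products of submodular functions are not submodular in general; so the concrete three-case marginal computation, rather than any abstract closure property, is what carries the argument. Everything else is bookkeeping from the definitions.
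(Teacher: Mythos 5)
Your proof is correct. The paper itself gives no proof of this proposition --- it is stated with a citation to Fujishige's book --- so there is no in-paper argument to compare against; your write-up is a valid, self-contained justification. Both halves check out: the characterization of the zero set is indeed just the definition of an ideal unwound, and the decomposition $f=\sum_{j\in N}g_j$ with $g_j(X)=\mathbf{1}[j\notin X]\cdot\mathbf{1}[X\cap D_j\neq\emptyset]$ is exact. Your three-case verification of the diminishing-returns inequality for each $g_j$ is sound; in particular the computation in the case $e\in D_j$, where the marginal collapses to the product $\mathbf{1}[j\notin X]\cdot\mathbf{1}[X\cap D_j=\emptyset]$ of two nonincreasing nonnegative indicators, is the right way to see it, and you are correct to flag that one cannot simply invoke closure of submodularity under products. (An equivalent shortcut worth knowing: each $g_j$ can be written as $\min\{\mathbf{1}[j\notin X],\,\mathbf{1}[X\cap D_j\neq\emptyset]\}$, and more structurally $f$ is a sum of functions of the form $h(|X\cap A|,|X\cap B|)$ with $h$ concave-like; but your direct case analysis is perfectly adequate and arguably more transparent.)
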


 Proposition~\ref{prop:down-submo} provides 
  a representation of a finite distributive lattice with a submodular function; 
  any finite distributive lattice is represented as the minimizer set of a submodular function. 
 Another interesting representation theorem for finite distributive lattices is described by {\em stable matchings}; 
 John Conway showed that the set of stable matchings forms 
  a distributive lattice under the preferences of Men (or Women, similarly)~\cite{Knuth91}. 
 Blair~\cite{Blair} showed that 
  any finite distributive lattice is isomorphic to 
  the distributive lattice of the stable matchings for a stable marriage instance.  

\subsection{${\rm M}^{\natural}$-concavity is a proper subclass of submodularity}
 A set function $f\colon 2^N\rightarrow \mathbb{R}$ is {\em ${\rm M}^\natural$-concave} (cf.~\cite{Murota18}) 
 if, for any $X,Y\in 2^N$ and $i \in X\setminus Y$, we have 
\begin{equation}
  f(X)+f(Y) \leq f(X-i)+f(Y+i)
\label{mconcave1}
\end{equation}
 or else  
\begin{equation}
f(X)+f(Y) \leq f(X-i+j)+f(Y+i-j)
\label{mconcave2}
\end{equation}
 holds for some $j \in Y \setminus X$, 
 where $X-i, Y+i, X-i+j, Y+i-j$ are abbreviations of 
  $X \setminus \{ i \}, Y \cup \{ i \}, (X \setminus \{ i \})\cup \{ j \}, (Y \cup \{ i \})\setminus \{ j \}$, respectively.
 ${\rm M}^{\natural}$-concavity is introduced by Murota~\cite{MurotaB,MS99} 
  as a quantitative version of matroid extending the exchange property to set functions~\cite{Murota18}, and  
  it is a closely related to {\em valuated matroid} introduced by Dress and Wenzel~\cite{DW90,DW92}. 
 Fujishige and Yang \cite{FY03} showed that ${\rm M}^{\natural}$-concavity is equivalent 
  to {\em gross substitutes property} of Kelso and Crawford~\cite{KC82} in economics. 
 ${\rm M}^{\natural}$-concavity is also found in many areas such as 
   systems analysis,
   inventory theory in operations research, and 
   mathematical economics and game theory including stable matching~\cite{Murota18,MurotaB00,ST15,FT07}. 
 ${\rm M}^{\natural}$-concavity is extensionally defined on multidimensional integer lattice, and 
  it is a key concept in the theory of ``discrete convex analysis''
  \cite{MurotaB,Murota10,Murota16,Murota18,MS99,Shioura98,ST15}. 

 Interestingly, any ${\rm M}^{\natural}$-concave set function is submodular~\cite{MurotaB}. 
 Thus, 
  an ${\rm M}^{\natural}$-concave set function is minimized efficiently, 
    using an algorithm for submodular minimization. 
 In contrast, 
   not every submodular function is ${\rm M}^{\natural}$-concave; 
   meaning that ${\rm M}^{\natural}$-concavity is a proper subclass of submodularity for set functions. 
 In fact, any ${\rm M}^{\natural}$-concave set function is efficiently maximized by a greedy algorithm, 
  so is a matroid rank function \cite{DW92,Shioura98,Murota10}. 
 It is known that 
  the set of maximizers of an ${\rm M}^{\natural}$-concave set function forms 
   a {\em generalized matroid} (a.k.a.\ ${\rm M}^{\natural}$-convex family), and conversely 
  any generalized matroid appears as the maximizer set of an ${\rm M}^{\natural}$-concave set function~\cite{MurotaB,Murota16}.

 Since an ${\rm M}^{\natural}$-concave set function is submodular, 
   the minimizer set of an ${\rm M}^{\natural}$-concave set function forms a distributive lattice. 
 It is a natural question if 
   any finite distributive lattice appears as the minimizer set of an ${\rm M}^{\natural}$-concave function. 
 A naive candidate may be the function given by \eqref{def:down-submo}. 
 We briefly remark that this is not the case.

\begin{proposition}\label{prop:bad-example}
 The set function $f$ given by \eqref{def:down-submo} is NOT ${\rm M}^{\natural}$-concave, in general. 
\end{proposition}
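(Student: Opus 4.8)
The plan is to refute $\mathrm{M}^{\natural}$-concavity by a single explicit counterexample. Note that $f$ is $\mathrm{M}^{\natural}$-concave only if, for \emph{every} choice of $X,Y\in 2^N$ and $i\in X\setminus Y$, at least one of \eqref{mconcave1} and \eqref{mconcave2} holds; so to prove the proposition it suffices to produce one poset $\mathcal{P}=(N,\preccurlyeq)$, together with sets $X,Y$ and an element $i\in X\setminus Y$, for which \eqref{mconcave1} fails \emph{and} \eqref{mconcave2} fails for \emph{all} $j\in Y\setminus X$.

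The construction I would use is the smallest poset in which the defining quantity \eqref{def:down-submo} jumps superadditively: take $N=\{1,2,3\}$ with $1\prec 3$ and $2\prec 3$, so that $1$ and $2$ are incomparable minimal elements lying below the single maximal element $3$. The intuition is that neither $\{1\}$ nor $\{2\}$ dominates anything, whereas the singleton $\{3\}$ reaches down and counts both $1$ and $2$ at once; this concentration of value is exactly what the local exchange inequalities cannot absorb. Concretely I would set $X=\{3\}$ and $Y=\{1,2\}$, which forces $i=3$ (since $X\setminus Y=\{3\}$) and leaves $Y\setminus X=\{1,2\}$ as the only candidates for $j$. The remaining work is a routine evaluation of $f$ on a handful of subsets: one checks $f(\{3\})=2$ and $f(\{1,2\})=0$, so the common left-hand side equals $2$, while $f(\emptyset)+f(\{1,2,3\})=0$ refutes \eqref{mconcave1}, and $f(\{1\})+f(\{2,3\})=f(\{2\})+f(\{1,3\})=1$ refutes \eqref{mconcave2} for $j=1$ and $j=2$ respectively.

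There is no serious obstacle here: the only genuinely nontrivial step is \emph{guessing} a small enough witness, and the three-element poset above already suffices, so verification collapses to a finite table of values. The one point worth stating carefully is the quantifier structure of \eqref{mconcave2} — it must be checked to fail for \emph{every} $j\in Y\setminus X$, not merely for some $j$ — which is why the symmetric roles of $1$ and $2$ in the chosen instance are convenient, as a single symmetry argument disposes of both cases at once.
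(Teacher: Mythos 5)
Your proposal is correct and coincides with the paper's own proof: the same three-element poset ($1\prec 3$, $2\prec 3$), the same choice $X=\{3\}$, $Y=\{1,2\}$, $i=3$, and the same numerical values refuting \eqref{mconcave1} and \eqref{mconcave2} for both $j=1$ and $j=2$. Nothing further is needed.
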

\begin{figure}
\centering
\includegraphics[width=4cm,clip]{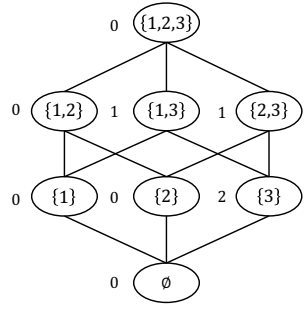}
\caption{The set function $f$ in the proof of Proposition~\ref{prop:bad-example}}
\label{zu1}
\end{figure}
\begin{proof}
 Let ${\cal P} = (\{1,2,3\}, \preccurlyeq)$ be given by $1 \prec 3$ and $2 \prec 3$.
 Then, ${\cal I}({\cal P}) = \{ \emptyset,\{1\}, \{2\}, \{1,2\},\{1,2,3\} \}$.
 We will check that 
  the set function $f$ given by \eqref{def:down-submo} 
   is not M$^{\natural}$-concave for ${\cal P}$ (see Figure~\ref{zu1}).
 Let $X=\{ 3 \}$, $Y=\{ 1,2 \}$ and $i=3$. 
 Then, 
  it is easy to observe that $f$ does not satisfy \eqref{mconcave1}, 
  because $f(X)=2$, $f(Y)=0$, $f(X-i)=0$ and $f(Y+i)=0$. 
 It is also easily confirmed that $f$ does not satisfy \eqref{mconcave2} for any $j \in \{1,2\}$.
 For $j=1$, 
  $f(X-i+j)=f(\{1\})=0$ and $f(Y+i-j)=f(\{2,3\})=1$, then \eqref{mconcave2} does not hold.
 For $j=2$, 
  $f(X-i+j)=f(\{2\})=0$ and $f(Y+i-j)=f(\{1,3\})=1$, then \eqref{mconcave2} does not hold.
\end{proof}

 This paper presents a representation theorem for finite distributive lattices, and 
   affirmatively answers the question. 
 Our result supports the fact that 
   ${\rm M}^{\natural}$-concavity covers a large part of submodularity for set functions; 
   the minimizer sets of ${\rm M}^{\natural}$-concave set functions represent all finite distributive lattices, 
   so do submodular functions.

\section{$\mathrm{M}^\natural$-concave functions with the minimizer set ${\cal I}({\cal P})$}
This section establishes the following theorem. 
\begin{theorem}\label{thm:main}
 For any finite poset ${\cal P} = (N,\preccurlyeq)$, 
  the exists an $\mathrm{M}^\natural$-concave function $f\colon 2^N \to \mathbb{R}$ satisfying 
\begin{equation}\label{cond:min}
f(X)\begin{cases}
=0& \text{if $X \in {\cal I}({\cal P})$,}\\
>0& \text{otherwise,}
\end{cases}
\end{equation}
 for any $X \in 2^N$. 
\end{theorem}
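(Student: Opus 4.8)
The plan is to construct an explicit M$^\natural$-concave function whose zero set is exactly $\mathcal{I}(\mathcal{P})$, correcting the failure of the naive function from Proposition~\ref{prop:down-submo}. The obstruction in Proposition~\ref{prop:bad-example} is instructive: the naive penalty counts dominated elements that are omitted, but the M$^\natural$-exchange axiom failed because adding a maximal element like $3$ (whose predecessors $1,2$ are both present) costs too much at once relative to the single-element exchanges allowed by \eqref{mconcave2}. This suggests the penalty should accumulate more smoothly, so that removing or swapping one element changes $f$ by a controlled amount.

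First I would look for a \emph{separable convex} correction term. A natural candidate is to let
\begin{equation}
f(X) = g(X) + \lambda\, h(|X|)
\end{equation}
where $g$ penalizes ``inversions'' of the poset order within $X$ and $h$ is a convex function of the cardinality, tuned so that the exchange axiom is restored while the zero set is preserved. More promisingly, I would try to define $f$ directly as a weighted count that respects the exchange structure: assign each element $j\in N$ a weight, and for each $X$ charge, for every $j\notin X$ that is dominated by some element of $X$, a cost depending on the ``rank'' or height of $j$ in $\mathcal{P}$. The key design constraint is that $f(X)=0$ exactly on ideals (so $f\ge 0$ and $f$ vanishes iff $X$ is downward closed), which forces any violation $j\prec i$ with $i\in X$, $j\notin X$ to contribute strictly positively.

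The verification splits into two parts. The \textbf{easy part} is the zero-set condition \eqref{cond:min}: by construction $f\ge 0$, and $f(X)=0$ iff $X$ has no ``missing predecessor,'' i.e.\ iff $X$ is an ideal. The \textbf{hard part}, which I expect to be the main obstacle, is verifying the M$^\natural$-concave exchange axiom \eqref{mconcave1}--\eqref{mconcave2} for all $X,Y$ and all $i\in X\setminus Y$. This is where the cardinality/height weighting must be calibrated: given $i\in X\setminus Y$, I must show that either moving $i$ from $X$ to $Y$ (inequality \eqref{mconcave1}) or swapping $i$ for some suitable $j\in Y\setminus X$ (inequality \eqref{mconcave2}) does not increase $f(X)+f(Y)$. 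The natural strategy is to choose $j$ to be a \emph{minimal} element of $Y\setminus X$ lying below $i$ when one exists (to heal an inversion), and otherwise to appeal to \eqref{mconcave1}; the analysis is a case check on whether $i$ is a maximal violator and whether $Y+i$ or $Y+i-j$ introduces new inversions.

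Rather than checking the exchange axiom by brute force, a cleaner route would be to exhibit $f$ as an M$^\natural$-concave function via a known closure or construction: M$^\natural$-concavity is preserved under operations such as adding a separable concave term and under taking certain transformations of valuated matroids \cite{MurotaB,Murota18}. If I can write $-f$ as the restriction to $\{0,1\}^N$ of an M$^\natural$-concave function on the integer lattice built from the incidence structure of $\mathcal{P}$ (for instance, as an infimal convolution or a network-induced valuation), M$^\natural$-concavity would follow from stability results rather than a direct axiom verification. I would pursue this structural approach first, falling back on the explicit case analysis only if no clean closure argument presents itself.
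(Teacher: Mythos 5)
Your proposal is a plan rather than a proof: no concrete function is ever written down, and consequently neither the M$^\natural$-concavity nor the zero-set condition \eqref{cond:min} is verified for anything. The last idea you mention --- realizing $f$ as a network-induced valuation so that M$^\natural$-concavity follows from a known structural result rather than a direct axiom check --- is in fact exactly the route the paper takes, but you leave it entirely unexecuted. The paper builds the bipartite graph $G_0=(U,V;E_0)$ with $U=\{u_1,\dots,u_n\}$, $V=\{v_1,\dots,v_n\}$ and $E_0=\{\{u_i,v_j\}: j\preccurlyeq i\}$, assigns weight $0$ to the diagonal edges $\{u_i,v_i\}$ and weight $1$ to all others, and sets $f_0(X)$ to be the maximum weight of a matching whose $U$-side is exactly $U_X$. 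M$^\natural$-concavity is then immediate from the standard fact (Proposition~\ref{prop:Murota}) that maximum-weight-matching valuations of this form are M$^\natural$-concave. That concrete choice is the missing key idea; without it your ``structural approach'' has no content to apply the closure/stability results to.

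The remaining work, which your sketch also omits, is the zero-set verification, and it is not purely formal. If $X\notin{\cal I}({\cal P})$ one exhibits a matching containing an off-diagonal (weight-$1$) edge $\{u_i,v_j\}$ with $i\in X$, $j\prec i$, $j\notin X$, so $f_0(X)>0$. If $X\in{\cal I}({\cal P})$ one must show the diagonal matching $\{\{u_i,v_i\}:i\in X\}$ is the \emph{only} matching covering exactly $U_X$, which the paper does by taking a maximal chain $\emptyset=Y_0\subset\cdots\subset Y_k=X$ in ${\cal I}({\cal P})$ and observing $|\Gamma(U_{Y_l})|=l$, forcing the matching by induction. Your first two candidate constructions (a separable convex correction of the inversion count, or a height-weighted inversion count) are not shown to satisfy the exchange axiom, and given that Proposition~\ref{prop:bad-example} already kills the unweighted count, there is no reason to expect a cardinality reweighting alone to restore M$^\natural$-concavity; you would still face the full case analysis you were hoping to avoid.
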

 In proofs of Theorem~\ref{thm:main},  
 we will use the following known fact (see Appendix for a proof). 
\begin{proposition}[cf. \cite{Murota16} (Section 3.6 ``Example'' 6, p.~173)]\label{prop:Murota}
 Let $G=(U,V;E)$ be a bipartite graph with vertex bipartition $(U,V)$ 	and edge set $E$, and  
 let $w\colon E \to \mathbb{R}$ be an edge weight function. 
 For $M \subseteq E$, we denote by $\partial M$ the set of  the vertices incident to some edge in $M$, and 
 call $M$ a matching  if $|U \cap \partial M| = |M| = |V \cap \partial M|$. 
 For $X \subseteq U$ denote by $f(X)$ the maximum weight of a matching that precisely matched $X$ in $U$, i.e.,  
\begin{equation}
 f(X) = \max\{w(M) : \mbox{$M \subseteq E$ is a matching satisfying $\partial M \cap U = X$}\}
\end{equation}
 with $w(M)=\sum_{e \in M}w(e)$, where $f(X)=-\infty$\footnote{
   Here we omit the argument on ``effective domain'' concerning $f(X)=-\infty$ (see e.g., \cite{Murota16}). 
   The ${\rm M}^{\natural}$-concave functions appearing in this paper satisfy $f(X)>-\infty$ for any $X \in 2^N$.
 } if no such $M$ exists for $X$. 
 Then $f\colon 2^U \to \mathbb{R} \cup \{-\infty\} $ is an $M^{\natural}$-concave function.  
\end{proposition}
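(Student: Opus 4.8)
The plan is to prove the exchange property \eqref{mconcave1}--\eqref{mconcave2} directly by an alternating-path argument on optimal matchings. Fix $X, Y \subseteq U$ and $i \in X \setminus Y$. If $X \setminus Y = \emptyset$ there is nothing to prove, and if $f(X) = -\infty$ or $f(Y) = -\infty$ the left-hand side of both inequalities equals $-\infty$ and they hold trivially; so I would assume both values are finite and choose matchings $M_X, M_Y \subseteq E$ attaining them, i.e.\ $\partial M_X \cap U = X$, $\partial M_Y \cap U = Y$, $w(M_X) = f(X)$, and $w(M_Y) = f(Y)$.

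Next I would examine the symmetric difference $M_X \symmdiff M_Y$. Since each of $M_X, M_Y$ is a matching, every vertex has degree at most one in each, hence degree at most two in $M_X \symmdiff M_Y$; thus its connected components are vertex-disjoint simple paths and even cycles whose edges alternate between $M_X$ and $M_Y$. The vertex $i$ is matched by $M_X$ but not by $M_Y$ (as $i \notin Y$), so the $M_X$-edge at $i$ lies in $M_X \symmdiff M_Y$ while no $M_Y$-edge does; hence $i$ has degree one and is the endpoint of a uniquely determined alternating path $P$ whose first edge belongs to $M_X$.

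The core step is to swap the two matchings along $P$: set $M_X' = M_X \symmdiff P$ and $M_Y' = M_Y \symmdiff P$. Both are again matchings, and since along $P$ the two matchings merely trade their edges, the total weight is preserved, $w(M_X') + w(M_Y') = w(M_X) + w(M_Y)$. I would then classify $P$ by its other endpoint. Because $G$ is bipartite and $P$ starts at $i \in U$ with an $M_X$-edge, the $U$-vertices of $P$ occur at even positions (reached by $M_Y$-edges) and the $V$-vertices at odd positions (reached by $M_X$-edges), so every internal $U$-vertex of $P$ stays matched after the swap and only the two endpoints of $P$ can change matched status on the $U$-side. If the far endpoint lies in $V$, then on $U$ only $i$ changes, so $M_X'$ matches $X - i$ and $M_Y'$ matches $Y + i$, giving $f(X-i) + f(Y+i) \ge w(M_X') + w(M_Y') = f(X) + f(Y)$, which is \eqref{mconcave1}. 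If instead the far endpoint is a vertex $j \in U$, then $j$ is matched by $M_Y$ only, whence $j \in Y \setminus X$; after the swap $M_X'$ matches $X - i + j$ and $M_Y'$ matches $Y + i - j$, yielding $f(X-i+j) + f(Y+i-j) \ge f(X) + f(Y)$, which is \eqref{mconcave2} for this $j$.

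The routine but essential verifications are that $M_X'$ and $M_Y'$ are genuine matchings precisely matching the stated $U$-sets, and that an endpoint of $P$ in $U$ is matched by exactly one of $M_X, M_Y$ (so it really lies in $Y \setminus X$). The only real obstacle is the bookkeeping of which endpoint arises and the corresponding change in the matched $U$-set; once the alternating path through $i$ is isolated, the bipartite parity forces exactly the two cases above, matching the two alternatives in the definition of $\mathrm{M}^\natural$-concavity, and the weight-preservation identity converts each case into the required inequality.
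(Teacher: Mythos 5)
Your proof is correct and takes essentially the same route as the paper's: both isolate the alternating path $P$ starting at $i$ inside the union (equivalently, symmetric difference) of two optimal matchings $M_X, M_Y$, swap the edges of the two matchings along $P$, and obtain \eqref{mconcave1} or \eqref{mconcave2} according to whether the far endpoint of $P$ lies in $V$ or in $U$. Your write-up merely makes explicit the degenerate $-\infty$ cases, the path/cycle decomposition, and the parity bookkeeping that the paper's proof leaves implicit.
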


\subsection{A simple ${\rm M}^{\natural}$-concave set function}
 Now, we present an ${\rm M}^{\natural}$-concave set function for Theorem~\ref{thm:main}. 
 As given an arbitrary finite poset ${\cal P}=(N,\preccurlyeq)$ of order $n$, 
  let $G_0 =(U,V;E_0)$ be a bipartite graph with 
  vertex set consisting of 
   $U=\{u_1,u_2,\dots,u_n\}$ and $V=\{v_1,v_2,\dots,v_n\}$, and 
  edge set given by 
   $E_0=\{\{u_i,v_j\}:\mbox{$u_i\in U$, $v_j\in V$ and $j\preccurlyeq i$ on $\mathcal{P}$}\}$. 
  Let $w_0 \colon E_0 \to \mathbb{Z}_{\geq 0}$ be the edge weight function given by 
\begin{align}
 w_0(\{u_i,v_j\})= \begin{cases}
  0 &\mbox{if $i=j$,}\\
  1 &\mbox{otherwise,}
 \end{cases} 
\label{def:w-simple}
\end{align}
  for any edge $\{u_i,v_j\}\in E_0$. 
 Let $f_0\colon 2^N\rightarrow \mathbb{R}$ be defined by
\begin{equation}\label{def:f-simple}
 f_0(X) = \max{\{w_0(M): 
   \mbox{$M \subseteq E_0$ is a matching satisfying $\partial M \cap U = U_X$} \}
   } 
\end{equation}
 for any $X \in 2^N$, where $w_0(M) = \sum_{e \in M}w_0(e)$ and $U_X = \{u_i \in U : i \in X\}$. 

\begin{lemma}\label{prop:simple}
 For any finite poset, $f_0$ is $\mathrm{M}^\natural$-concave, and it satisfies \eqref{cond:min}. 
\end{lemma}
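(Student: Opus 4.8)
The plan is to split the lemma into its two assertions, treating M$^\natural$-concavity as essentially immediate and concentrating the work on the minimizer condition \eqref{cond:min}. For M$^\natural$-concavity I would simply invoke Proposition~\ref{prop:Murota}: under the identification $i \leftrightarrow u_i$ the function $f_0$ defined by \eqref{def:f-simple} is exactly the maximum-weight-matching function attached to the bipartite graph $G_0$ and the weights $w_0$, so it is M$^\natural$-concave. I would also record at the outset that $f_0$ is real-valued (never $-\infty$): since $\preccurlyeq$ is reflexive, every diagonal edge $\{u_i,v_i\}$ lies in $E_0$, so for any $X$ the diagonal matching $\{\{u_i,v_i\}: i\in X\}$ precisely matches $U_X$. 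That same matching has weight $0$, giving $f_0(X)\geq 0$ for all $X$; and because each weight is $0$ or $1$, $w_0(M)$ counts the off-diagonal edges of $M$. Hence \eqref{cond:min} reduces to showing that $f_0(X)>0$ holds precisely when some matching of $U_X$ uses at least one off-diagonal edge, equivalently precisely when $X\notin\mathcal{I}(\mathcal{P})$.

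The easy direction is that $X\notin\mathcal{I}(\mathcal{P})$ forces $f_0(X)>0$. If $X$ is not an ideal, there exist $i\in X$ and $j\notin X$ with $j\prec i$, so the edge $\{u_i,v_j\}\in E_0$ has weight $1$. Since $j\notin X$ the vertex $v_j$ is free for the diagonal matching on $X\setminus\{i\}$, and adjoining this edge yields a matching $\{\{u_k,v_k\}:k\in X\setminus\{i\}\}\cup\{\{u_i,v_j\}\}$ that precisely matches $U_X$ and has weight $1$. Therefore $f_0(X)\geq 1$.

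The crux is the converse: if $X\in\mathcal{I}(\mathcal{P})$ then $f_0(X)=0$. Here I would encode an arbitrary matching $M$ precisely matching $U_X$ as an injection $\sigma\colon X\to N$ with $\sigma(i)\preccurlyeq i$ for all $i\in X$ (writing $u_i\mapsto v_{\sigma(i)}$), so that $w_0(M)$ equals the number of $i$ with $\sigma(i)\neq i$. The first key observation is that downward closure of the ideal gives $\sigma(i)\preccurlyeq i\in X\Rightarrow\sigma(i)\in X$, so $\sigma$ maps the finite set $X$ injectively into itself and is a permutation of $X$. The main obstacle, and the heart of the proof, is then the order-theoretic fact that a permutation $\sigma$ of a finite poset with $\sigma(i)\preccurlyeq i$ for all $i$ must be the identity; I would prove it by a standard orbit argument, chaining $i\succcurlyeq\sigma(i)\succcurlyeq\sigma^2(i)\succcurlyeq\cdots$ around each cycle of $\sigma$ back to $i$ and invoking antisymmetry to collapse the cycle to a fixed point. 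Consequently $\sigma=\mathrm{id}_X$, every edge of $M$ is diagonal, and $w_0(M)=0$; as $M$ was arbitrary, $f_0(X)=0$. Combining the two directions yields \eqref{cond:min}, completing the proof.
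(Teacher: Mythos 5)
Your proposal is correct, and its overall skeleton matches the paper's proof: M$^\natural$-concavity is delegated to Proposition~\ref{prop:Murota}, the case $X\notin\mathcal{I}(\mathcal{P})$ is handled by exhibiting a matching containing one weight-$1$ edge $\{u_i,v_j\}$ with $j\prec i$, $j\notin X$ (you are slightly more careful than the paper in writing this matching out explicitly), and the case $X\in\mathcal{I}(\mathcal{P})$ is reduced to showing the diagonal matching is the \emph{only} matching precisely covering $U_X$. Where you genuinely diverge is in how you prove this uniqueness. The paper takes a maximal chain $\emptyset=Y_0\subset Y_1\subset\cdots\subset Y_k=X$ in $\mathcal{I}(\mathcal{P})$, observes that $|\Gamma(U_{Y_l})|=l$ because each $Y_l$ is an ideal, and runs an induction along the chain to force every edge to be diagonal. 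You instead encode a matching as an injection $\sigma\colon X\to N$ with $\sigma(i)\preccurlyeq i$, use downward closure to see $\sigma$ is a permutation of $X$, and then collapse each cycle $i\succcurlyeq\sigma(i)\succcurlyeq\cdots\succcurlyeq\sigma^m(i)=i$ by antisymmetry to conclude $\sigma=\mathrm{id}$. Both arguments are sound; yours is arguably more self-contained and elementary (it uses only antisymmetry and finiteness, with no appeal to the lattice structure of $\mathcal{I}(\mathcal{P})$ or to a Hall-type counting of neighborhoods), while the paper's chain induction makes the role of the ideal lattice more visible and reuses the same template in Lemma~\ref{lem:f1=f2}. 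Either way the conclusion $f_0(X)=0$ for ideals, and hence \eqref{cond:min}, follows.
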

\begin{proof}
 The ${\rm M}^{\natural}$-concavity of $f_0$ follows from Proposition~\ref{prop:Murota}. 
 \eqref{cond:min} is not difficult, as follows. 
 If $X \not\in {\cal I}(P)$, 
  there exist $i \in X$ and $j \not\in X$ such that $j\prec i$. 
 Then, there is a matching $M$ 
   such that $\{u_i,v_j\} \in M$ and $\partial M \cap U= U_X$. 
 Since $w(\{u_i,v_j\})=1$, $f(X) \geq w(M) > 0$, and we obtain \eqref{cond:min} in this case.  
 Suppose $X \in {\cal I}(P)$. 
 Then, $\{\{u_i,v_i\} : i \in X \}$ is a trivial matching, and its weight is zero. 
 We claim that the trivial matching is the unique matching satisfying $\partial M \cap U= U_X$ in the case. 
 Let $\emptyset = Y_0 \subset Y_1  \subset \cdots \subset Y_k = X$ be 
  a maximal chain from $\emptyset$ to $X$ on ${\cal I}({\cal P})$ where $k=|X|$. 
 Notice that 
  $|\Gamma(U_{Y_l}\})| = l$ holds for $l=0,1,\ldots, k$ since $Y_l \in {\cal I}({\cal P})$, 
 where $\Gamma(U')$ for $U' \subseteq U$ denotes the adjacent vertices of $U'$, 
 i.e.,  $\Gamma(U') = \{ v \in V : \exists u \in U',\ \{u,v\}\in E_0 \}$. 
 By an induction on $l$, we can see that 
  only the trivial matching for $Y_l$ satisfies the condition $\partial M \cap U= U_{Y_l}$. 
 Thus, $f_0(X)=0$ in the case. 
\end{proof}
 Theorem~\ref{thm:main} follows Lemma~\ref{prop:simple}.

\subsection{Another $\mathrm{M}^\natural$-concave set function}
 This subsection presents another $\mathrm{M}^\natural$-concave function for Theorem~\ref{thm:main},  
  in fact it is presented in a preliminary version of this manuscript. 
  Here, we give a simpler proof. 
 As given an arbitrary finite poset ${\cal P}=(N,\preccurlyeq)$ of order $n$, 
   let $G_1 =(U,V;E_1)$ be a bipartite graph 
    with vertex set consisting of 
    $U=\{u_1,u_2,\dots,u_n\}$ and $V=\{v_1,v_2,\dots,v_n\}$, and 
   edge set given by 
    $E_1=\{\{u_i,v_j\}:\mbox{$u_i\in U$, $v_j\in V$ and $j\prec i$ on $\mathcal{P}$}\}$.  
 Note that $E_1 = E_0 \setminus \{\{u_i,v_i\} : i \in N\}$. 
  Let $w_1 \colon E_1 \to \mathbb{Z}_{\geq 0}$ be the edge weight function given by 
\begin{align}
 w_1(\{u_i,v_j\})=\max\{ |S|-1: \text{$S \subseteq N$ is a chain such that $j \preccurlyeq s \preccurlyeq i$ for any $s \in S$} \}, 
\label{def:w1}
\end{align}
  for any edge $\{u_i,v_j\}\in E_1$, 
 where $S \subseteq N$ is a {\em chain} of $S$ if $(S,\preccurlyeq)$ is a totally ordered set, 
	  i.e., $w_1(\{u_i,v_j\})$ denotes the ``length'' of a maximum  chain between $j$ and $i$ for $j \prec i$.  
 Let $f_1\colon 2^N\rightarrow \mathbb{R}$ be defined by
\begin{equation}\label{def:f1}
 f_1(X) = \max{\{w_1(M): 
 \mbox{ $M \subseteq E_1$ is a matching satisfying $\partial M \subseteq U_X \cup V_{\overline{X}}$}\} } 
\end{equation}
 for any $X \in 2^N$, 
  where $U_X = \{ u_i \in U : i\in X \}$ and $V_{\overline{X}} = \{ v_i \in V: i \not\in X \}$.

\begin{lemma}\label{teiri1}
 For any finite poset, 
  $f_1$ is $\mathrm{M}^\natural$-concave, and it satisfies \eqref{cond:min}. 
\end{lemma}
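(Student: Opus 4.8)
The plan is to verify the two assertions of \eqref{cond:min} by a direct matching argument, and to obtain M$^{\natural}$-concavity by rewriting $f_1$ in the exact form covered by Proposition~\ref{prop:Murota}. For \eqref{cond:min}, suppose first that $X\in\mathcal{I}(\mathcal{P})$. Every edge $\{u_i,v_j\}\in E_1$ satisfies $j\prec i$, so if $u_i\in U_X$ (that is, $i\in X$) then downward-closedness of the ideal forces $j\in X$, i.e.\ $v_j\notin V_{\overline{X}}$; hence no edge of $E_1$ has both endpoints in $U_X\cup V_{\overline{X}}$, the empty matching is the only feasible one, and $f_1(X)=0$. If instead $X\notin\mathcal{I}(\mathcal{P})$, then some $j\prec i$ has $i\in X$ and $j\notin X$, so the single edge $\{u_i,v_j\}$ is a feasible matching of weight $w_1(\{u_i,v_j\})\ge 1$, giving $f_1(X)>0$. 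Note that this part uses only $w_1>0$ on $E_1$, not its precise value.

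For the M$^{\natural}$-concavity I would work with the graph $G_0$ and the weight $\hat{w}\colon E_0\to\mathbb{Z}_{\ge 0}$ given by $\hat{w}(e)=w_1(e)$ for $e\in E_1$ and $\hat{w}(\{u_i,v_i\})=0$ for the self-loops, and define $g(X)=\max\{\hat{w}(M):M\subseteq E_0\text{ a matching with }\partial M\cap U=U_X\}$. Since the self-loops $\{u_i,v_i\}$ for $i\in X$ always form a feasible matching, $g$ is finite, and by Proposition~\ref{prop:Murota} it is M$^{\natural}$-concave. Everything then reduces to the identity $g=f_1$. The inequality $g(X)\ge f_1(X)$ is the easy direction: starting from an optimal matching for $f_1(X)$, whose $U$-endpoints lie in $U_X$ and whose $V$-endpoints lie in $V_{\overline{X}}$, I extend it to a matching saturating exactly $U_X$ by adding the self-loop $\{u_i,v_i\}$ for each $i\in X$ not yet matched. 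These self-loops are available precisely because the $V$-endpoints avoid $X$, so $v_i$ is free for every such $i$; they contribute weight $0$, so the extended matching is $g$-feasible of the same weight.

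The reverse inequality $g(X)\le f_1(X)$ is where the chain-length definition of $w_1$ is indispensable, and I expect it to be the main obstacle. Call $v_j$ a \emph{bad bottom} of a matching $M$ if $\{u_i,v_j\}\in M\cap E_1$ for some $i$ and $j\in X$. I would take an optimal matching $M$ for $g(X)$ and repeatedly eliminate bad bottoms. If $v_j$ is a bad bottom then, since $j\in X$ forces $u_j$ to be matched while $v_j$ is occupied, $u_j$ must use a genuine edge $\{u_j,v_k\}$ with $k\prec j$; replacing $\{u_i,v_j\},\{u_j,v_k\}$ by $\{u_i,v_k\},\{u_j,v_j\}$ (valid since $k\prec j\prec i$ gives $k\prec i$) keeps $\partial M\cap U=U_X$ and changes the weight by $w_1(\{u_i,v_k\})-w_1(\{u_i,v_j\})-w_1(\{u_j,v_k\})\ge 0$, where the inequality is the super-additivity of $w_1$: concatenating a longest chain from $k$ to $j$ with one from $j$ to $i$ is a chain from $k$ to $i$. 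By optimality the weight is unchanged, the bad bottom $j$ is replaced by a strictly lower index $k$, and no other bottom is affected, so the sum of the poset-ranks of the bad bottoms strictly decreases; moreover a minimal element of $X$ can never be a bad bottom (its $u$-vertex would have no genuine edge and a blocked self-loop, contradicting feasibility), so a reroute is always available until none remain. The resulting matching has all $V$-endpoints outside $X$, hence is feasible for $f_1(X)$ and has weight $g(X)$, proving $g(X)\le f_1(X)$. Combining the two bounds yields $f_1=g$, which is M$^{\natural}$-concave.
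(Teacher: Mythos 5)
Your proposal is correct and follows essentially the same route as the paper: your auxiliary function $g$ is exactly the paper's $f_2$ (the graph $G_0$ with zero-weight diagonal edges), the easy direction is the same self-loop padding, and the hard direction is the same local swap justified by the superadditivity of the chain-length weights, with Proposition~\ref{prop:Murota} supplying M$^{\natural}$-concavity. The only cosmetic differences are that you verify \eqref{cond:min} directly on $f_1$ (slightly cleaner than the paper's appeal to the argument of Lemma~\ref{prop:simple}) and that your final matching still carries the zero-weight self-loops, which must be discarded before it is literally $f_1$-feasible.
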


 It is not trivial from Proposition~\ref{prop:Murota} that $f_1$ is ${\rm M}^{\natural}$-concave\footnote{
   We gave a naive proof in the preliminary version.
   }. 
 To prove Lemma~\ref{teiri1}, we introduce another set function $f_2$, as follows. 
 As given an arbitrary finite poset ${\cal P}=(N,\preccurlyeq)$ of order $n$, 
  let $G_2 = (U,V;E_2)$ be given by $G_2=G_0$. 
 Let $w_2 \colon E_2 \to \mathbb{Z}_{\geq 0}$ be the edge weight function given by 
\begin{align}
 w_2(\{u_i,v_j\})= \begin{cases} 
 0 & \mbox{if $i=j$,}\\
 w_1(\{u_i,v_j\})& \mbox{otherwise,} 
 \end{cases}
\label{def:w2}
\end{align}
  for any edge $\{u_i,v_j\}\in E_2$. 
 Let $f_2 \colon 2^N\rightarrow \mathbb{R}$ be defined by
\begin{equation}\label{def:f2}
 f_2(X) = \max{\{w_2(M): 
   \mbox{$M \subseteq E_2$ is a matching satisfying  
   $\partial M \cap U = X$} \}
   } 
\end{equation}
 for any $X \in 2^N$, where $U_X=\{ u_i \in U : i\in X \}$.

\begin{lemma}\label{lem:f1=f2}
 For any finite poset, $f_2 \equiv f_1$. 
 Furthermore, $f_2$ is $\mathrm{M}^\natural$-concave, and it satisfies \eqref{cond:min}. 
\end{lemma}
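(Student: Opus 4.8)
The plan is to treat the three assertions separately, taking the identity $f_2 \equiv f_1$ as the heart of the matter; the other two follow with little effort. Indeed, $f_2$ is exactly a maximum-weight matching function of the form treated in Proposition~\ref{prop:Murota} (with bipartite graph $G_2 = G_0$ and weight $w_2$), so its $\mathrm{M}^\natural$-concavity is immediate from that proposition. Moreover, since $G_2 = G_0$, the argument establishing \eqref{cond:min} in the proof of Lemma~\ref{prop:simple}---which concerns only the graph $G_0$ and not its weights---applies verbatim to show that for $X \in \mathcal{I}(\mathcal{P})$ the only matching with $\partial M \cap U = U_X$ is the diagonal matching $\{\{u_i,v_i\}: i \in X\}$, whose $w_2$-weight is $0$; hence $f_2(X)=0$. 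For $X \notin \mathcal{I}(\mathcal{P})$ I would pick $i \in X$ and $j \notin X$ with $j \prec i$, replace the diagonal edge $\{u_i,v_i\}$ by $\{u_i,v_j\}$ in the diagonal matching, and check this remains a matching with $\partial M \cap U = U_X$ of weight $w_2(\{u_i,v_j\}) = w_1(\{u_i,v_j\}) \ge 1$; hence $f_2(X) > 0$. This gives \eqref{cond:min} for $f_2$, and, once $f_1 \equiv f_2$ is established, for $f_1$ as well, which is precisely what Lemma~\ref{teiri1} requires.

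For the identity $f_1 \equiv f_2$ I would first record the crucial \emph{super-additivity} of $w_1$ along chains: whenever $j \prec k \prec i$ one has $w_1(\{u_i,v_j\}) \ge w_1(\{u_i,v_k\}) + w_1(\{u_k,v_j\})$, obtained by concatenating a maximum chain from $j$ to $k$ with one from $k$ to $i$ at their common element $k$. With this in hand I would prove the two inequalities. The inequality $f_1(X) \le f_2(X)$ is the easy direction: given an optimal matching $M$ for $f_1(X)$, all of whose edges lie in $E_1$ with $\partial M \subseteq U_X \cup V_{\overline{X}}$, I would match every still-unmatched vertex $u_i$ with $i \in X$ by its diagonal edge $\{u_i,v_i\}$. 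Since the $V$-endpoints of $M$ have indices outside $X$, the vertices $v_i$ with $i \in X$ are free, so the result is a matching with $\partial M \cap U = U_X$ whose $w_2$-weight equals $w_1(M) = f_1(X)$; thus $f_2(X) \ge f_1(X)$.

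The reverse inequality $f_2(X) \le f_1(X)$ is the main obstacle, because an optimal matching $M$ for $f_2(X)$ may use off-diagonal edges $\{u_i,v_j\}$ whose $V$-endpoint has index $j \in X$, which are illegal for $f_1$. My plan is to discard the zero-weight diagonal edges of $M$ and analyse the off-diagonal part $M^{\ast} \subseteq E_1$, whose $w_2$-weight equals $w_1(M^{\ast}) = f_2(X)$. Viewing each off-diagonal edge $\{u_i,v_j\}$ as an arc $i \to j$ in $\mathcal{P}$ (with $j \prec i$), the matching condition makes each index a source at most once and a target at most once, and since $\prec$ is acyclic these arcs decompose into directed chains $i_1 \succ i_2 \succ \cdots \succ i_k$. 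Because $M$ \emph{precisely} matches $U_X$, every source lies in $X$, so $i_1,\dots,i_{k-1} \in X$, while the final target $i_k$ must lie outside $X$ (otherwise $u_{i_k}$ would have to be matched, forcing the diagonal edge $\{u_{i_k},v_{i_k}\}$, contradicting that $v_{i_k}$ is already used as a target). I would then replace each such chain by the single \emph{shortcut} edge $\{u_{i_1},v_{i_k}\} \in E_1$; by iterated super-additivity its weight is at least the total weight of the chain, and by construction $i_1 \in X$, $i_k \notin X$, so it is a legal $f_1$-edge.

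Finally I would observe that the shortcut edges use pairwise distinct chain-tops in $U_X$ and pairwise distinct chain-bottoms in $V_{\overline{X}}$, hence form a matching admissible for $f_1(X)$ of weight at least $w_1(M^{\ast}) = f_2(X)$, yielding $f_1(X) \ge f_2(X)$. Combined with the easy direction this gives $f_1 \equiv f_2$, and together with the first paragraph completes the proof of the lemma. I expect the only genuinely delicate point to be the chain-decomposition and shortcut step, where the choice of $w_1$ as a maximum chain length---and its resulting super-additivity---is exactly what makes the off-diagonal weight recoverable by legal $f_1$-edges.
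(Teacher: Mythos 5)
Your proof is correct and follows essentially the same route as the paper: $\mathrm{M}^\natural$-concavity of $f_2$ from Proposition~\ref{prop:Murota}, the easy inequality $f_1\le f_2$ by padding with zero-weight diagonal edges, and the hard inequality $f_2\le f_1$ via the super-additivity $w_1(\{u_i,v_j\})\ge w_1(\{u_i,v_k\})+w_1(\{u_k,v_j\})$ for $j\prec k\prec i$. The only cosmetic difference is that the paper performs the contraction iteratively (repeatedly merging the two edges meeting at an index $i$ with $v_i\in V_X$), whereas you decompose the off-diagonal edges into directed chains and shortcut each one in a single step; the two arguments have the same fixed point and rely on the same key inequality.
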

\begin{proof}
 Firstly, we prove $f_1(X) \leq f_2(X)$ for any $X \in 2^N$. 
 Suppose that $M \subseteq E_1$ is a matching attaining  $f_1(X)=w_1(M)$, i.e., $\partial M \subseteq U_X \cup V_{\overline{X}}$. 
 Let 
  $M'=M\cup \{\{u_i,v_i\} : u_i \in  U_X \setminus \partial M\}$.
 Then, $\partial M'\cap U =U_X$ holds, and hence 
  $f_2(X) \geq w_2(M') = w_1(M) = f_1(X)$.

 Next, we prove $f_1(X) \geq f_2(X)$ for any $X \in 2^N$. 
 Suppose that $M \subseteq E_2$ is a matching attaining  $f_2(X)=w_2(M)$, i.e., $\partial M \cap U = U_X$. 
 We iteratively construct a matching $M' \subseteq E_1$ satisfying $\partial M' \subseteq U_X \cup V_{\overline{X}}$, as follows.  
 To begin with, 
   set $M'_1 = M \setminus \{\{u_i,v_i\} : i \in X\}$. 
 If $\partial M'_1 \cap V_X = \emptyset$ then $M'_1$ is a desired matching, and 
   we obtain $f_1(X) \geq f_2(X)$ in the case, since $w_1(M) = w_2(M'_1)$, clearly. 
 Suppose $v_i \in \partial M'_1 \cap V_X$.  
 Here, we remark that $M'_1$ satisfies the condition (*) 
   ``if $v_i \in \partial M'_1 \cap V_X$ then $u_i  \in \partial M'_1$,'' since $\partial M \cap U = U_X$.  
 Without loss of generality, we may assume that 
    $\{u_k,v_i\} \in M'_1$ and $\{u_i,v_j \} \in M'_1$. 
 This implies that $j \prec i \prec k$, and hence $\{u_k,v_j\} \in E_2$. 
 Set $M'_2:= M'_1 \cup \{\{u_k,v_j\}\} \setminus \{\{u_k,v_i\},\{u_i,v_j\}\}$. 
 Then, we obtain a matching $M'_2$, 
 such that $\partial M'_2 \cap V_X = (\partial M'_1 \cap V_X) \setminus \{v_i\}$ holds and 
   $M'_2$ inherits the condition (*). 
 We also remark that $w_2(M'_2) \geq w_2(M'_1)$ holds, 
   since $w_2(\{u_k,v_j\}) \geq  w_2(\{u_k,v_i\}) + w_2(\{u_i,v_j\})$ by \eqref{def:w1}. 
 Recursively applying the above arguments, 
   we eventually obtain a matching $M'$ such that $\partial M' \cap V_X = \emptyset$, 
   meaning that  $\partial M' \subseteq U_X \cup V_{\overline{X}}$. 
 It is not difficult to see from the above argument, $f_1(X) \geq w_1(M') \geq w_2(M) = f_2(X)$. 

 The ${\rm M}^{\natural}$-concavity of $f_2$ follows from Proposition~\ref{prop:Murota}. 
 It is easy to see that $f_2$ satisfies \eqref{cond:min}, in a similar way as Lemma~\ref{prop:simple}. 
\end{proof}

%
%
%

 Lemma~\ref{teiri1} is immediate from Lemma~\ref{lem:f1=f2}. 
 Each of Lemmas \ref{teiri1} and \ref{lem:f1=f2} implies Theorem~\ref{thm:main}.

\section{The indicator function of a distributive lattice}
 Let $r \in \mathbb{R}_{>0}$ be an arbitrary. 
 As given a finite poset ${\cal P}=(N,\preccurlyeq)$, 
  we define $g_r \colon 2^N \to \mathbb{R}$ by 
\begin{eqnarray}
 g_r(X) = \exp(-r f(X)) 
\end{eqnarray}
 for $X \in 2^N$, where $f$ is an ${\rm M}^{\natural}$-concave set function satisfying \eqref{cond:min}. 
 For convenience, let $g_{\infty}(X) = \lim_{r \to \infty}g_r(X)$, then 
\begin{eqnarray}
g_{\infty}(X) = \begin{cases}
1 & \mbox{if } X \in {\cal I}({\cal P}), \\
0 & \mbox{otherwise}, 
\end{cases}
\end{eqnarray}
 holds for any $X \in 2^N$, 
  meaning that $g_{\infty}$ is the indicator function of the distributed lattice ${\cal I}({\cal P})$. 

 Specifically, 
  if we set $r=(n+2)\ln 2$, then we obtain $g_r(X) = 2^{-(n+2) f(X)}$, and then 
  $|{\cal I}({\cal P})| - \frac{1}{4} 
   \leq \sum_{X \in 2^N} g_r(X) 
   \leq |{\cal I}({\cal P})| + \frac{1}{4}$ holds. 
 This implies that 
  if we have an approximation algorithm 
    for the partition function of a log-${\rm M}^{\natural}$-convex function, 
  then we can approximate the number of ideals of a poset. 
 By standard arguments (cf.~\cite{JVV86,DGGJ04}) 
  about a {\em fully polynomial-time randomized approximation scheme} ({\em FPRAS}), 
  we can conclude as follows. 
\begin{theorem}
 If there is 
  a polynomial time approximate sampler for a log-${\rm M}^{\natural}$-convex distribution, 
  then {\em counting bipartite independent set} ({\em {\#}BIS}) has an FPRAS. 
\end{theorem}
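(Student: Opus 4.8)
The plan is to chain three reductions: (i) from the promised sampler to an FPRAS for the partition function $Z_r = \sum_{X \in 2^N} g_r(X)$; (ii) from that FPRAS to an FPRAS for the number of ideals $|\mathcal I(\mathcal P)|$; and (iii) from counting ideals to \#BIS via a known approximation-preserving reduction.

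For (i) I would invoke the self-reducibility machinery of Jerrum, Valiant and Vazirani~\cite{JVV86} in its weighted (Gibbs) form. The ground set $2^N$ is self-reducible: fixing whether $n \in X$ splits the partition function as $Z_r = Z_r^{(0)} + Z_r^{(1)}$, where $Z_r^{(0)}$ and $Z_r^{(1)}$ are the partition functions of $g_r$ obtained by restricting $f$ to $N-n$ and by contracting $n$ (i.e.\ conditioning on $n \in X$), respectively. The crucial point is that the restriction and the contraction of an $\mathrm{M}^\natural$-concave function are again $\mathrm{M}^\natural$-concave, so every conditional distribution arising in the recursion is again log-$\mathrm{M}^\natural$-convex and the promised sampler applies to it verbatim. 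Using the standard ``follow the more likely coordinate'' strategy, at each of the $n$ levels one identifies by sampling the more probable value of the current coordinate, whose conditional probability $p_i$ is then bounded below by essentially $1/2$ and hence estimable to relative error $\epsilon/n$ from polynomially many (approximate) samples; since $Z_r = g_r(X^\ast)\prod_i p_i^{-1}$ for the fully fixed set $X^\ast$ (and $g_r(X^\ast)$ is computed directly), multiplying the estimates and propagating the sampler's total-variation error yields a $(1\pm\epsilon)$-approximation of $Z_r$ in time polynomial in $n$ and $1/\epsilon$.

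For (ii), recall from the discussion preceding the theorem that with $r = (n+2)\ln 2$ one has $|Z_r - |\mathcal I(\mathcal P)|| \le \tfrac14$ (since $f$ is integer-valued with $f(X)\ge 1$ off $\mathcal I(\mathcal P)$, and there are at most $2^n$ such $X$). Converting a $(1\pm\epsilon)$-approximation of $Z_r$ into a $(1\pm\delta)$-approximation of the \emph{integer} $|\mathcal I(\mathcal P)|$ is routine: a rough run of the FPRAS first locates $|\mathcal I(\mathcal P)|$ up to a constant factor; if it is small, one reruns with a constant $\epsilon$ small enough that $\epsilon Z_r < \tfrac14$ and rounds the estimate to the nearest integer to recover $|\mathcal I(\mathcal P)|$ exactly, whereas if it is large the additive slack $\tfrac14$ is already negligible relative to $|\mathcal I(\mathcal P)|$ and the estimate itself is a $(1\pm\delta)$-approximation. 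Either branch runs in time polynomial in $n$ and $1/\delta$, giving an FPRAS for counting the ideals of $\mathcal P$.

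For (iii), I would appeal to the theory of approximation-preserving (AP) reductions: counting the ideals (down-sets) of a finite poset is AP-interreducible with \#BIS~\cite{DGGJ04}, and in particular $\#\mathrm{BIS} \le_{\mathrm{AP}} \#\text{ideals}$. Composing this reduction with the FPRAS for $|\mathcal I(\mathcal P)|$ from (ii) yields an FPRAS for \#BIS, which is the claim. I expect the main obstacle to be step (i): one must verify that the class of log-$\mathrm{M}^\natural$-convex distributions is closed under the conditioning used in the self-reduction (this is precisely closure of $\mathrm{M}^\natural$-concavity under restriction and contraction), and must ensure the sampler's total-variation error does not accumulate beyond $\epsilon$ across the $n$-fold telescoping product; both are standard but are where the real work lies, while steps (ii) and (iii) are, respectively, an elementary rounding argument and a citation.
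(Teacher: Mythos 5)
Your proposal is correct and follows exactly the route the paper intends: the paper itself gives no detailed proof, merely citing ``standard arguments (cf.~\cite{JVV86,DGGJ04})'' after establishing the bound $|\mathcal{I}(\mathcal{P})|-\frac14 \le \sum_X g_r(X) \le |\mathcal{I}(\mathcal{P})|+\frac14$ for $r=(n+2)\ln 2$, and your three steps (JVV self-reducibility using closure of ${\rm M}^{\natural}$-concavity under restriction and contraction, the rounding of $Z_r$ to $|\mathcal{I}(\mathcal{P})|$, and the AP-reduction $\#\mathrm{BIS} \le_{\mathrm{AP}} \#\text{Downsets}$ from \cite{DGGJ04}) are precisely the content of that citation. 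Your write-up is in fact more explicit than the paper's.
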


\section{Concluding Remarks}
 We have shown that 
  any finite distributive lattice is isomorphic to 
  the minimizer set of an ${\rm M}^{\natural}$-concave set function. 
 The result implies that sampling from log-${\rm M}^{\natural}$-convex set function is {\#}BIS-hard 
   under the polynomial-time randomized approximate reduction. 
 It is a major open problem if an FPRAS exists for {\#}BIS (cf.~\cite{CGGGJSV16}), 
   with some applications 
   such as stable matching~\cite{Cheng08,KN12}.

\section*{Acknowledgments}
 The authors are grateful to Kazuo Murota 
   for his suggestion about the problem, and 
   for his kind advice on our preliminary manuscript. 
 The authors also grateful to Satoru Fujishige, Akihisa Tamura and Naoyuki Kamiyama for their valuable comments. 
 This work is partly supported by JST PRESTO Grant Number JPMJPR16E4, Japan.

\appendix
\section{Proof of Proposition~\ref{prop:Murota}}
\begin{proof}[Proof of Proposition~\ref{prop:Murota} (cf. \cite{MurotaB00} (Example 5.2.4, p.\ 282))] 
 Suppose $f(X) > -\infty$ and $f(Y) > -\infty$ for $X,Y \subseteq U$. 
 Let $M_X$ and $M_Y$ be respectively matchings attaining $f(X)$ and $f(Y)$, 
  i.e., $f(X)=w(M_X)$ and $f(Y)=w(M_Y)$. 
 Let $M  = M_X \cup (P \cap M_Y) \setminus (P \cap M_X)$ and 
  let $M' = M_Y \cup (P \cap M_X) \setminus (P \cap M_Y)$, where  
  $P \subseteq M_X \cup M_Y$ is the alternating path from $u \in X \setminus Y$. 
 Let $v \in U \cup V$ denote the other end of $P$. 
 In case of $v \in V$, 
   $\partial M \cap U = X - u$ and $\partial M' \cap U = Y+u$ hold. 
 This implies that $f(X-u) + f(Y+u) \geq w(M)+w(M') = w(M_X)+w(M_Y) = f(X)+f(Y)$, and 
  we obtain \eqref{mconcave1}. 
 In the other case, i.e., $v \in U$, 
   $\partial M \cap U = X - u+v$ and $\partial M' \cap U = Y+u-v$ hold. 
 This implies that $f(X-u+v) + f(Y+u-v) \geq w(M)+w(M') = w(M_X)+w(M_Y) = f(X)+f(Y)$, and 
  we obtain \eqref{mconcave2}. 
\end{proof}

\end{document}